\newcommand{\customlabel}[2]{%
   \protected@write \@auxout {}{\string \newlabel {#1}{{#2}{\thepage}{#2}{#1}{}} }%
   \hypertarget{#1}{#2}
}
\crefname{algocf}{Algorithm}{Algorithms}
\Crefname{algocf}{Algorithm}{Algorithms}
\newcommand{\N}{\mathbb{N}}
\newcommand{\E}{\mathbb{E}}
\newcommand{\F}{\mathbb{F}}
\newcommand{\C}{\mathbb{C}}
\newcommand{\seq}{\subseteq}
\newcommand{\ip}[1]{\left\langle #1 \right\rangle}
\newcommand{\agr}{{\rm agr}}
\renewcommand{\sp}[1]{{\rm span}\{#1\}}
\newcommand{\abs}[1]{{\left|#1\right|}}
\renewcommand{\sp}{\mathsf{span}}
\newcommand{\ALG}{{\mathrm ALG}}
\newcommand{\Spec}{{\mathrm{Spec}}}
\theoremstyle{plain} 
\newtheorem{theorem}{Theorem}[section]
\newtheorem{lemma}[theorem]{Lemma}
\newtheorem{proposition}[theorem]{Proposition}
\newtheorem{claim}[theorem]{Claim}
\newtheorem{definition}[theorem]{Definition}
\newtheorem{remark}[theorem]{Remark}
\newcommand{\FormatAuthor}[3]{
\begin{tabular}{c}
#1 \\ {\small\texttt{#2}} \\ {\small #3}
\end{tabular}
}
\newcommand{\igor}[1]{\textcolor{blue}{\textbf{(Igor: #1})}}
\title{Matrix Multiplication Reductions}
\author{ 
\hspace*{-1.15cm}
\begin{tabular}[h!]{cccc}
\FormatAuthor{Ashish Gola}{ashish\_kg@sfu.ca}{Simon Fraser University} & 
\FormatAuthor{Igor Shinkar}{ishinkar@sfu.ca}{Simon Fraser University} & 
\FormatAuthor{Harsimran Singh}{harsimran\_singh\_3@sfu.ca}{Simon Fraser University}
\end{tabular}
} %
\begin{document}

\maketitle

\begin{abstract}
    In this paper we study a worst case to average case reduction for the problem of matrix multiplication over finite fields. Suppose we have an efficient \emph{average case} algorithm, that given two random matrices $A,B$ outputs a matrix that has a non-trivial correlation with their product $A \cdot B$. Can we transform it into a \emph{worst case} algorithm, that outputs the correct answer for all inputs without incurring a significant overhead in the running time? We present two results in this direction.

    \begin{description}
        \item[Two-sided error in the high agreement regime]
        We begin with a brief remark about a reduction for high agreement algorithms, i.e., an algorithm which agrees with the correct output on a large (say $>0.9$) fraction of entries, and show that the standard self-correction of linearity allows us to transform such algorithms into algorithms that work in worst case.

        \item[One-sided error in the low agreement regime]
        Focusing on average case algorithms with one-sided error, we show that over $\F_2$ there is a reduction that gets an $O(T)$ time \emph{average case} algorithm that given a random input $A,B$ outputs a matrix that agrees with $A \cdot B$ on at least $51\%$ of the entries (i.e., has only a slight advantage over the trivial algorithm), and transforms it into an $\widetilde{O}(T)$ time \emph{worst case} algorithm, that outputs the correct answer for \emph{all inputs} with high probability.
    \end{description}

\end{abstract}

\section{Introduction}

The problem of efficiently multiplying two matrices has been  extensively  studied for decades. Improving on the straightforward $O(n^3)$ time algorithm, Strassen's algorithm~\cite{Strassen1969} computes the product of two matrices in time $O(n^{\log_2 7} = n^{2.807})$, and it is perhaps the most widely used in practice. Since then, a long and exciting line of research (\cite{Pan-V.Ya}, \cite{BINI-et.al}, \cite{Schonhage}, \cite{Romani}, \cite{Strassen-86}, \cite{COPPERSMITH}, \cite{Stothers}, \cite{Williams-12}, \cite{Le-Gall}, \cite{AV21}) has led to a significant improvement of the value of the optimal exponent of the running time for matrix multiplication problem. The fastest algorithm known today is due to Duan, Wu, and  Zhou \cite{Duan-Wu-Zhou}, and its running time is $O(n^{2.371866})$.

Worst-case to average-case reductions serve as a means to convert algorithms that output correct answers on a fraction of inputs into algorithms with correct outputs on all possible inputs. These reductions can be viewed from two different perspectives. From the hardness point of view, they can be used to show that a problem maintains its hardness even in the average case. From the algorithmic side, they provide a framework for developing worst-case algorithms, by first designing weak algorithms with average case guarantees, and then transforming them into algorithms which work on all outputs.

In this paper, we study the following variant of a worst-case to average-case reduction for the matrix multiplication problem. Suppose we have an efficient algorithm that given two random matrices $A,B \in \F^{n \times n}$, computes a matrix $C \in \F^{n \times n}$ that agrees with the product $A \cdot B$ on a large fraction of the entries of the matrix.
Can we transform such an algorithm into one that computes $A \cdot B$ correctly for all entries of the output matrix without incurring a significant overhead in the running time? 

More formally, we define the \emph{agreement} between two matrices as the fraction of entries on which the two matrices agree.
\begin{definition}
Let $\F$ be a field, and let $A, B\in \F^{n \times n}$ be two matrices. We define agreement between $A$ and $B$, denoted by $\agr(A,B)$, as the fraction of entries $(i,j)$ on which $A_{i,j} = B_{i,j}$, i.e.,
\begin{equation*}
    \agr(A, B) = \frac{|\{(i,j):A_{i,j} = B_{i,j}\}|}{n^2}
    \enspace.
\end{equation*}
\end{definition}

Then, our goal can be stated as the task of transforming an algorithm that on a random input $A,B$ outputs a matrix $C$ such that $\agr(C,AB) \geq \alpha$ for some parameter $\alpha \in [0,1]$ into an algorithm that solves the matrix multiplication problem \emph{correctly on all inputs}.

We present two results in this direction. Both results consider the matrix multiplication problem over finite fields. 

\paragraph{High agreement regime with two-sided error:}
We show that any algorithm that solves the matrix multiplication problem correctly on a high fraction of the coordinates, can be converted into a worst case algorithm. Specifically, we prove the following theorem.

\begin{restatable}{theorem}{TWOSIDED}\label{thm:highagreement_mm}
    Fix a finite field $\F$.
    Let $\alpha \in (0, 1/8)$.
    Let $\ALG$ be an algorithm that gets as input two matrices $A,B \in \F^{n \times n}$, runs in time $T(n)$, and outputs a matrix $C \in \F^{n \times n}$. Suppose that
    \begin{equation*}
        \E_{A,B  \in \F^{n \times n}}[\agr(ALG(A,B), A \cdot B)] > 1-\alpha
        \enspace.
    \end{equation*}
    Then, there is an algorithm $\ALG^*$ that gets as input two matrices $A,B \in \F^{n \times n}$, runs in time $O(T(n) \cdot \log(n))$, and outputs a matrix $C \in \F^{n \times n}$ such that \emph{for all} $A,B$ it holds that
    \begin{equation*}
        \Pr[\ALG^*(A,B) =  A \cdot B] > 1-1/n
        \enspace,
    \end{equation*}
    where the randomness is only over the internal coins of $\ALG^*$.
\end{restatable}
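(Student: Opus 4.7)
The plan is to combine standard bilinear self-correction with a uniformly random row/column permutation, then amplify by majority voting. I would start from the identity
\begin{equation*}
    XY = (X+R_1)(Y+R_2) - (X+R_1)R_2 - R_1(Y+R_2) + R_1 R_2,
\end{equation*}
which reduces computing $XY$ to four calls of $\ALG$, each on a pair of marginally uniform independent random matrices, provided $R_1, R_2$ are uniform and independent of $X, Y$. Given the actual worst-case input $A, B$, I would additionally sample uniform random permutation matrices $P, Q$ and apply the identity with $X = PA$ and $Y = BQ$. The output $\tilde{C} = M_1 - M_2 - M_3 + M_4$ approximates $XY = P(AB)Q$, and to recover entry $(i,j)$ of $AB$ I would read off entry $(\pi(i), \sigma(j))$ of $\tilde{C}$, where $\pi, \sigma$ are the permutations encoded by $P$ and $Q$.

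For the per-coordinate analysis, let $q_{a,b} = \Pr[\ALG(X,Y)_{a,b} \neq (XY)_{a,b}]$ over uniform $X, Y$; the hypothesis is exactly that $\frac{1}{n^2}\sum_{a,b} q_{a,b} < \alpha$. Conditioning on $P, Q$, a union bound over the four calls gives $\Pr[\tilde{C}_{a,b} \neq (XY)_{a,b}] \le 4 q_{a,b}$, where the randomness is over $R_1, R_2$ and any internal coins of $\ALG$. For a fixed target coordinate $(i, j)$, the read-off position $(\pi(i), \sigma(j))$ is uniform on $[n]^2$ because $\pi, \sigma$ are uniform random permutations, so the per-trial error at $(i, j)$ is at most $4 \cdot \frac{1}{n^2}\sum_{a,b} q_{a,b} < 4\alpha < 1/2$, using the assumption $\alpha < 1/8$.

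The algorithm $\ALG^*$ would then run $K = \Theta(\log n)$ independent trials of this procedure with fresh $(P, Q, R_1, R_2)$ each time and output the coordinate-wise majority vote. Independence of trials together with Chernoff gives failure probability at most $1/n^3$ at each fixed $(i, j)$, and a union bound over the $n^2$ coordinates yields total failure probability at most $1/n$. Each trial makes four $\ALG$-calls at cost $O(T(n))$ plus $O(n^2)$ arithmetic overhead, so the total running time is $O(T(n) \log n)$, as required.

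The delicate step, and the reason the random permutations are necessary, is that plain self-correction could leave some coordinates $(i, j)$ with $q_{i, j}$ close to $1$ even though the global average of $q_{a, b}$ is below $\alpha$; for such coordinates every trial would be wrong with nearly the same high probability, and coordinate-wise majority would not help. Pre- and post-multiplying by uniform random permutations redistributes the per-coordinate error uniformly over the output matrix, so every target coordinate inherits the global average error, which is all that Chernoff amplification needs.
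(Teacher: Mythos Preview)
Your proposal is correct and essentially mirrors the paper's proof: the paper uses the same bilinear self-correction identity, the same $O(\log n)$-fold coordinate-wise majority vote, and the same Chernoff-plus-union-bound analysis. The only cosmetic difference is that the paper randomizes the output coordinate via uniform \emph{cyclic} row/column shifts $(\pi,\sigma)$ rather than full random permutation matrices $P,Q$; both devices make the read-off position uniform over $[n]^2$, which is all the argument needs.
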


The proof of this result relies on rather standard ideas, and essentially uses the self-correction of linear functions~\cite{BlumLR90}.

\paragraph{Low agreement with one-sided error:}
For this result, we restrict our discussion to the finite field $\F_2$.
Note that it is trivial to design an $O(n^2)$ time algorithm such that 
$\E_{A,B  \in \F_2^{n \times n}}[\agr(ALG(A,B), A \cdot B)] \geq 1/2$.
Indeed, the algorithm can simply output 0 in all entries irrespective of the input. Alternatively, the algorithm can output a random 0/1 matrix. Hence, it is natural to ask whether it is possible to obtain a better-than-1/2 algorithm for the matrix multiplication over $\F_2$.

Below we show that in the special case of \emph{one-sided error} approximation, any better-than-1/2 approximation $O(T)$ time algorithm can be transformed into a worst case algorithm with running time $\widetilde{O}(T)$. Formally, we prove the following theorem.

\begin{restatable}{theorem}{ONESIDED}\label{thm:mm-alg-reduction-one-sided}
    Let $\ALG$ be an algorithm that gets input two matrices $A,B \in \F_2^{n \times n}$, runs in time $T(n)$, and outputs a matrix $C \in \F_2^{n \times n}$. Let $\delta > 0$, and suppose that
    \begin{itemize}
        \item
        $\E_{A,B  \in \F^{n \times n}}[\agr(ALG(A,B), A \cdot B)] \geq 1/2+\delta$.
        \item\label{item:one-sided-error} If $(AB)_{i,j} = 0$, then $\ALG(A,B)_{i,j} = 0$.
    \end{itemize}
Then, there is an algorithm $\ALG^*$ that gets as input two matrices $A,B \in \F^{n \times n}$, runs in time $\widetilde{O}(T(n))$, and outputs a matrix $C \in \F^{n \times n}$ such that for all $A,B$ it holds that
    \begin{equation*}
        \Pr[\ALG^*(A,B) =  A \cdot B] > 1 - 1/n,
    \end{equation*}
    where the randomness is only over the internal coins of $\ALG^*$.
\end{restatable}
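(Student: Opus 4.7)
The plan is to amplify $\ALG$'s capture rate on random inputs so that it essentially recovers all $1$-entries of the product, producing an algorithm $\ALG'$ with high average agreement on uniform inputs, and then to invoke \Cref{thm:highagreement_mm} to lift $\ALG'$ to a worst-case algorithm. The crucial observation for the amplification is that the one-sided error condition $\ALG(X, Y) \leq X Y$ holds for \emph{all} $X, Y$, not only random ones, so the entry-wise OR of $\ALG$ outputs on any collection of pairs whose product equals $X Y$ always remains a subset of the $1$-entries of $X Y$ and never introduces a false positive.

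Concretely, the hypothesis $\E[\agr(\ALG(X, Y), X Y)] \geq 1/2 + \delta$, together with the fact that $X Y$ has roughly $n^2/2$ ones for uniform $X, Y \in \F_2^{n \times n}$, translates into an average per-entry capture rate of $\geq 2\delta - o(1)$ for the $1$-entries of $X Y$. To boost this towards $1$, I would run $\ALG$ on $m = \widetilde{O}(1/\delta)$ related pairs $(X_k, Y_k)$ satisfying $X_k Y_k = X Y$ and take the entry-wise OR of outputs. The source of variation across the $m$ calls can be $\ALG$'s own internal randomness (if it is randomized), the multiplicative self-reduction $(X P_k, P_k^{-1} Y)$ for uniform random invertible $P_k \in \F_2^{n \times n}$ (which preserves the product), or a combination together with additive shifts to control marginals. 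If the $m$ calls are sufficiently independent, each $1$-entry is missed by all trials with probability at most $(1 - 2\delta)^m \leq 1/n^3$, and a union bound over the $\leq n^2$ entries yields $\E[\agr(\ALG'(X, Y), X Y)] \geq 1 - 1/n$, well above the $1 - 1/8$ threshold required by \Cref{thm:highagreement_mm}. Applying that theorem to $\ALG'$ then produces the desired $\widetilde{O}(T(n))$-time worst-case algorithm, and for safety one may append an $O(\log n)$ rounds of verification by checking $C r = A (B r)$ for uniform $r \in \F_2^n$.

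The main obstacle is establishing a per-entry capture probability of $\Omega(\delta)$ in the amplification step. The hypothesis gives only an average over entries \emph{and} over $(X, Y)$, whereas the analysis above needs a lower bound for \emph{each} $1$-entry. Moreover, the natural multiplicative self-reduction $(X, Y) \mapsto (X P, P^{-1} Y)$ preserves the product but is supported on a slice of measure $\approx 2^{-n^2}$ inside $\F_2^{n \times n} \times \F_2^{n \times n}$, and therefore does not on the nose satisfy the uniform-input hypothesis of $\ALG$. Resolving this will likely require either augmenting the multiplicative randomization with additive perturbations that restore near-uniformity of the joint distribution while preserving $X_k Y_k = X Y$, or a symmetrization argument that transfers the average capture bound to a per-entry bound; executing this step rigorously is where most of the technical difficulty of the proof should lie.
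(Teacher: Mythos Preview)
Your proposal has a genuine and, as far as I can see, fatal gap, which you yourself identify in the last paragraph but do not resolve. The entire amplification step relies on producing pairs $(X_k,Y_k)$ with $X_kY_k=XY$ on which $\ALG$'s average-case guarantee applies. But these two requirements are in direct tension: any family with $X_kY_k=XY$ is supported on a set of density $\approx 2^{-n^2}$ inside $\F_2^{n\times n}\times\F_2^{n\times n}$, so the uniform-input hypothesis of $\ALG$ says nothing about it. Additive perturbations $(X+R,\,Y+S)$ do restore uniform marginals, but then the product changes by $XS+RY+RS$, and computing those correction terms is again matrix multiplication, so you are back where you started. Row/column shifts $(A^{\pi,0},B^{0,\sigma})$ permute the output but again do not randomize the input for fixed $A,B$. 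In short, there is no evident transformation that simultaneously (i) preserves (or cheaply corrects) the product and (ii) makes the inputs look uniform to $\ALG$; without such a transformation the OR-amplification argument cannot get off the ground.

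The paper takes a completely different route that sidesteps this obstruction. Rather than trying to preserve the product, it writes $A=M_A+L_A$ and $B=M_B+L_B$ with $L_A,L_B$ random low-rank (so the cross terms $M_AL_B$, $L_AM_B$, $L_AL_B$ are cheap), and then decomposes $M_A=\sum_{r=1}^t R_r$ and $M_B=\sum_{s=1}^t S_s^{(r)}$ into sums of \emph{genuinely uniform} matrices. The point is that $(M_AM_B)_{i,j}=\sum_{r,s}(R_rS_s^{(r)})_{i,j}$, and by the one-sided property, if $\ALG(R_r,S_s^{(r)})_{i,j}=1$ for \emph{every} $r,s$ then each summand is correct and the sum is correct. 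The event ``all $t^2$ calls output $1$ at $(i,j)$'' has tiny probability a priori, and this is where the additive combinatorics enters: the probabilistic Bogolyubov--Ruzsa lemma guarantees that for a good coordinate the relevant ``good input'' sets $X_{i,j}$ and $Y^{R_r}_{i,j}$ contain large affine subspaces, and the low-rank perturbation is engineered so that $M_A,M_B$ land in those subspaces with probability $2^{-O(\log^3(1/\delta))}$. The symmetrization over coordinates is then handled by the row/column shift, exactly the mechanism you gesture at, but only \emph{after} the per-coordinate problem has been solved by the Bogolyubov--Ruzsa machinery.
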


\begin{remark}
Below we make several comments about \cref{thm:mm-alg-reduction-one-sided}.

\begin{enumerate}
    \item Note that the conditions of the theorem can be written equivalently as follows.
    \begin{itemize}
        \item
    $\Pr_{\substack{A,B  \in \F_2^{n \times n} \\ i,j \in [n]}}[ALG(A,B)_{i,j} = 1] \geq \delta$.
    \item If $(AB)_{i,j} = 0$, then $\ALG(A,B)_{i,j} = 0$.
    \end{itemize}
    \item The notion of algorithms with one-sided error is typically studied in the context of randomized algorithms, e.g., relating to the classes $\mathcal{RP}$ (and $co\mathcal{RP}$), where the guarantee is that for \emph{every NO input} the algorithm outputs the correct answer with probability 1, and for \emph{every YES input} it is correct with probability at least 2/3. The error model in \cref{thm:mm-alg-reduction-one-sided} is different, as we consider algorithms that are correct on \emph{random inputs} on all output 0-bits, and on at least some $\alpha$-fraction of 1-bits.
    
    \item We remark that the standard methods of self-correcting linear functions work in the high agreement regime, but fail when the average case guarantee is low. We apply the techniques from additive combinatorics developed in~\cite{AsadiGGS22}, particularly a version of the probabilistic Bogolyubov-Ruzsa Lemma, to perform a self-correction procedure which helps in this regime.

    \item Our proof of \cref{thm:mm-alg-reduction-one-sided} assumes that $\ALG$ is deterministic. It is rather straightforward to extend the proof and allow it to be randomized, by appropriately modifying the sets of good inputs ($X_{i,j}$ and $Y_{i,j}^A$) to account for the randomness of the algorithm.
\end{enumerate}
\end{remark}

\subsection{Related work}
The study of average-case complexity began with Levin's work \cite{Levin-86}, followed by subsequent works like \cite{Ben-David_et_al_92}. A substantial body of research (e.g., \cite{Impagliazzo-Levin90}, \cite{Impagliazzo2011} and related references) identified numerous barriers in formulating worst-case to average-case reductions for NP-complete problems. For a comprehensive overview of this subject, see the classical surveys by Impagliazzo \cite{Impagliazzo1995}, Bogdanov and Trevisan \cite{Bogdanov-Trevisan-2006} and Goldreich \cite{Goldreich-2011}.

Asadi et al.~\cite{AsadiGGS22, AsadiGGSS24} presented a new framework for carrying out efficient worst-case to average case reductions for various fundamental problems. Particularly, for the problem of matrix multiplication, they proved that if there exists an $O(T(n))$ time algorithm $M$ for matrix multiplication which computes the correct output on an $\epsilon$ fraction of inputs, then there exists a randomized algorithm $M'$ which computes the correct output on all inputs, running in time $O(\exp(O(\log^5(1/\epsilon)) \cdot T(n))$. The proof relied on additive combinatorial techniques and used the probabilistic quasi-polynomial Bogolyubov-Ruzsa Lemma. 

Hirahara and Shimizu ~\cite{Hirahara-Shimizu} improved the $\exp(O(\log^5(1/\epsilon)))$ overhead to an $\widetilde{O}(1/\epsilon)$ factor. Their idea involved dividing the output matrix into smaller blocks and using the Direct-Product Theorem in a black-box manner.

The aforementioned papers assume that we have access to an algorithm which gives a fully correct output on some fraction of the inputs, i.e., for these inputs \emph{all entries} in the output matrix are correct. The setting presented in this paper, where the output of the given algorithm is not fully correct, seems to differ significantly from the works mentioned above.
In particular, we do not see how to apply the Direct-Product theorem to our setting of the problem.

\subsection{Open problems}

We mention the following two problems that are left open in this work.
\paragraph{Low agreement with two-sided error:}Is it possible to transform a two-sided error algorithm over $\F_2$ with a low agreement guarantee into a worst case algorithm.
That is, given an $O(T(n))$ time algorithm $ALG$ with the guarantee
$\E_{A,B  \in \F^{n \times n}}[\agr(ALG(A,B), A \cdot B)] > 1/2+\delta$,
can we convert it into an algorithm that correctly outputs the correct answer on all inputs and has running time $\widetilde{O}(T(n))$?
\paragraph{Generalizing over finite fields:} Extend \cref{thm:mm-alg-reduction-one-sided} in a meaningful way to work over any finite field.

\section{Preliminaries}

For a positive integer $n$ we define $[n] = \{0,1,\dots,n-1\}$.
We index the coordinates of our matrices starting from 0 rather than 1, which is typically more standard. We refer to the element in the row $i$ and column $j$ of the matrix $A$ as $A_{i,j}$.

We define a notion of \emph{row-shift} (or \emph{row-rotation}) and \emph{column-shift} as follows.

\begin{definition}
    Given a matrix $A \in \F^{n \times m}$, $0 \leq \pi \leq n-1$, and $0 \leq \sigma \leq m-1$,
    define $A^{\pi,\sigma}$ to be the matrix obtained from $A$ by cyclically rotating all its rows downwards by $\pi$ units and all its columns rightwards by $\sigma$ units, that is,
    \begin{equation*}
        (A^{\pi,\sigma})_{i,j} = A_{(i - \pi) \bmod n,  (j-\sigma) \bmod m}
        \enspace
    \end{equation*}
\end{definition}

The following proposition is immediate from the definition above.

\begin{proposition}
For any $A,B,C \in \F^{n \times n}$ and any $\pi, \sigma$ we have $AB = C$ if and only if $A^{\pi,0} \cdot B^{0,\sigma} = C^{\pi,\sigma}$.
\end{proposition}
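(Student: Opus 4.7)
The plan is to prove the proposition by direct calculation, expanding $(A^{\pi,0} \cdot B^{0,\sigma})_{i,j}$ using the definition of matrix multiplication and the definitions of row- and column-shifts, and observing that the shifts $\pi,\sigma$ act independently of the summation index.

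First, I would fix arbitrary indices $i \in [n]$, $j \in [n]$ and write
\begin{equation*}
    (A^{\pi,0} \cdot B^{0,\sigma})_{i,j} \;=\; \sum_{k=0}^{n-1} (A^{\pi,0})_{i,k} \cdot (B^{0,\sigma})_{k,j}.
\end{equation*}
By definition of the shifts, $(A^{\pi,0})_{i,k} = A_{(i-\pi) \bmod n,\, k}$ and $(B^{0,\sigma})_{k,j} = B_{k,\, (j-\sigma) \bmod n}$. Substituting these in, the summation index $k$ is untouched by the shifts, so the right-hand side equals
\begin{equation*}
    \sum_{k=0}^{n-1} A_{(i-\pi) \bmod n,\, k} \cdot B_{k,\, (j-\sigma) \bmod n} \;=\; (AB)_{(i-\pi) \bmod n,\, (j-\sigma) \bmod n}.
\end{equation*}

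Next, I would note that by the definition of $C^{\pi,\sigma}$, this last expression equals $(C^{\pi,\sigma})_{i,j}$ precisely when $C_{(i-\pi)\bmod n,\,(j-\sigma)\bmod n} = (AB)_{(i-\pi)\bmod n,\,(j-\sigma)\bmod n}$. Since $i,j$ range over all of $[n]$, the pairs $((i-\pi)\bmod n,\, (j-\sigma)\bmod n)$ also range over all of $[n]\times[n]$. Hence the equality $A^{\pi,0} \cdot B^{0,\sigma} = C^{\pi,\sigma}$ holds entrywise if and only if $AB = C$ holds entrywise, which establishes both directions of the if-and-only-if.

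There is no real obstacle here; the only point requiring care is keeping the indexing consistent (in particular, that $(\cdot)^{\pi,0}$ acts on the first index and $(\cdot)^{0,\sigma}$ acts on the second, and neither touches the contraction index $k$). Conceptually, the statement says that row-shifting the left factor permutes the rows of the product, while column-shifting the right factor permutes the columns of the product, and these two permutations commute and compose to exactly $(\cdot)^{\pi,\sigma}$ applied to $C$.
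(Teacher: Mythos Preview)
Your proof is correct and matches the paper's treatment: the paper simply states the proposition is immediate from the definition of the row/column shifts, and your entrywise computation is exactly the unwinding of that claim.
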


\subsection{Additive Combinatorics Tools}\label{section:additive-combinatorics}
We now present the additive combinatorics toolkit which will be useful in the worst-case to average-case reduction for the low agreement regime with one-sided error. 

For a set $A \subseteq \F_2^n$, let $1_A \colon \F_2^n \to \{0,1\}$ denote the indicator function of A. The Fourier expansion of a function $f \colon \F_2^n \to \C$ is given by $f(x) = \sum_{r \in \F_2^n} \hat{f}(r) \cdot \chi_r(x)$,
where $\chi_r(x) = (-1)^{\ip{x,r}}$, and the Fourier coefficients of $f$ are defined as $\hat{f}(r) = \ip{f,\chi_r} = \E_x[f(x) \cdot \chi_r(x)]$.
\emph{Parseval's identity} says that $\sum_{r \in {\F_2}^n} {\widehat{1_A}(r)}^2 = \ip{1_A,1_A}  = \alpha$, where $\alpha$ is the density of $A$.

Define $\Spec_{\gamma}(A) = \{r \in \F_2^n : \abs{\widehat{1_A}(r)} \geq \gamma\}$. Below we state Chang's lemma, which describes a certain structure of $\Spec_{\gamma}(A)$.

\begin{lemma}[Chang's Theorem \cite{Chang02}]\label{lemma:chang}
	Let $A \seq \F^n$ be a set of size $\abs{A} = \alpha \cdot \abs{\F}^n$, and let $\gamma>0$. Then
	\begin{equation*}
	      \dim(\sp(\Spec_{\gamma\alpha}(A))) \leq O \left(\frac{\log(1/\alpha)}{\gamma^2} \right)
	      \enspace.
	\end{equation*}
\end{lemma}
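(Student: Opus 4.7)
The plan is to show that any linearly independent subset $\Lambda = \{r_1, \ldots, r_k\} \subseteq \Spec_{\gamma\alpha}(A)$ satisfies $k = O(\log(1/\alpha)/\gamma^2)$. Since we are working over $\F_2$, the dimension of $\sp(\Spec_{\gamma\alpha}(A))$ equals the maximum size of a linearly independent subset it contains, so this suffices.

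Fix such a $\Lambda$. I would choose signs $\epsilon_r \in \{\pm 1\}$ with $\epsilon_r \widehat{1_A}(r) = \abs{\widehat{1_A}(r)}$ and set $g := \sum_{r \in \Lambda} \epsilon_r \chi_r$. The inner product $\ip{1_A, g}$ has an easy lower bound coming straight from the definition of $\Spec$,
\[
\ip{1_A, g} = \sum_{r \in \Lambda} \epsilon_r \widehat{1_A}(r) = \sum_{r \in \Lambda} \abs{\widehat{1_A}(r)} \geq k\gamma\alpha,
\]
and we want to match this with an upper bound. By Hölder's inequality with conjugate exponents $p, q$,
\[
\ip{1_A, g} \leq \norm{1_A}_q \cdot \norm{g}_p = \alpha^{1/q} \cdot \norm{g}_p.
\]
For $\norm{g}_p$ I would use the following observation: extend $\{r_1,\ldots,r_k\}$ to a basis of $\F_2^n$ and change coordinates accordingly; each $\chi_{r_i}(x)$ then becomes $(-1)^{y_i}$ in the new coordinates, so $g(y) = \sum_{i=1}^k \epsilon_i (-1)^{y_i}$ is, under a uniform $y$, a sum of $k$ i.i.d.\ $\pm 1$ Rademacher variables. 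Khintchine's inequality then gives $\norm{g}_p \leq C\sqrt{pk}$ for every even $p \geq 2$.

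Finally I would choose $p = \Theta(\log(1/\alpha))$, so that $\alpha^{1/q} = \alpha^{1 - 1/p} = \Theta(\alpha)$. Putting everything together yields
\[
k\gamma\alpha \;\leq\; O(\alpha) \cdot C\sqrt{pk} \;=\; O\!\left(\alpha\sqrt{\log(1/\alpha)\cdot k}\right),
\]
which rearranges to $k = O(\log(1/\alpha)/\gamma^2)$, as desired.

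The main obstacle is identifying the right way to bound $\norm{g}_p$. The change-of-basis reduction to a Rademacher sum is the slick route, letting Khintchine's inequality do the heavy lifting; working directly on $\F_2^n$ instead requires Rudin's inequality for dissociated sets, which is more intricate and typically proved via hypercontractivity of the noise operator $T_\rho$. Once this bound is in hand, the remainder is a routine Hölder application and optimization of $p$, and the $\log(1/\alpha)$ factor arises precisely from balancing $\alpha^{1/q}$ against the $\sqrt{p}$ from Khintchine.
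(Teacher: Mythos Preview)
Your proof is correct and is essentially the standard modern proof of Chang's lemma via H\"older plus Khintchine (equivalently, Rudin's inequality). Note, however, that the paper does not give its own proof of this lemma at all: it is stated as a known result with a citation to \cite{Chang02} and then used as a black box inside the proof of \cref{lemma:bogolyubov-quasipoly}. So there is no ``paper's approach'' to compare against; your argument simply supplies what the paper omits.

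One small remark: the lemma as stated allows a general finite field $\F$, while your argument is phrased over $\F_2$ (the change-of-basis step turning the $\chi_{r_i}$ into i.i.d.\ Rademachers uses that characters are $\pm1$-valued). This is harmless here since the paper only applies the lemma over $\F_2$, and the same argument extends to arbitrary $\F$ with complex-valued characters and the complex Khintchine inequality, but it is worth being explicit about the scope.
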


Recall that the subset sum of two sets $A$ and $B$ is defined as $A+B =\{a+b : a\in A \;, b \in B\}$.
Analogously, we define $tA = A+A+\dots+A$ ($t$ times) as $tA = \{a_1+a_2\dots+a_t : a_1,a_2,\dots,a_t \in A\}$.
The following lemma says that for an arbitrary set $A \seq \F^n$, the sumset $tA$ contains a large affine subspace.

\begin{lemma}[Probabilistic quasi-polynomial Bogolyubov-Ruzsa lemma]\label{lemma:bogolyubov-quasipoly}
    Let $A \seq \F_2^n$ be a set of size $\abs{A} = \alpha \cdot 2^n$, for some $\alpha \in (0,1]$, and let $t \geq 3$ be an integer. Then, $tA$ contains an \emph{affine} subspace $V \seq \F_2^n$ of dimension $\dim(V) \geq n - O(\log(1/\alpha))$ such that for all $v \in V$ it holds that
    \begin{equation*}
        \Pr_{a_1,a_2,..,a_{t-1} \in \F_2^n}[a_1, a_2, a_3,..,a_t \in A] \geq 
        {\alpha}^t\left( 1 + \frac{1}{2^{t-2}} \right)  - \frac{{\alpha}^{t-1}}{2^{t-2}}
        \enspace,
    \end{equation*}
    where $a_t = v-a_1-a_2-..-a_{t-1}$.

    In particular, if $t > \log_2(1/\alpha)+2$, then $tA$ contains an \emph{affine} subspace $V \seq \F_2^n$ of dimension $\dim(V) \geq n - k$, for $k = O(\log(1/\alpha))$, such that for all $v \in V$ it holds that
    \begin{equation*}
        \Pr_{\substack{a_1,a_2,..,a_t \in \F_2^n \\ v = \sum_{i=1}^t a_i}}[a_1, a_2, a_3,..,a_t \in A] \geq (\alpha/2)^t
        \enspace.
    \end{equation*}
    \end{lemma}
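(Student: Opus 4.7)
The plan is to analyze the $t$-fold convolution $1_A^{*t}$ on $\F_2^n$ via its Fourier expansion. The probability $\Pr_{a_1,\dots,a_{t-1}}[a_1,\dots,a_t \in A]$, with $a_t = v - a_1 - \cdots - a_{t-1}$, is exactly $1_A^{*t}(v)$, and by Fourier inversion $1_A^{*t}(v) = \sum_r \widehat{1_A}(r)^t \chi_r(v)$. The strategy is to split this sum at the spectrum $\Spec_{\alpha/2}(A)$ (i.e.\ take $\gamma = 1/2$ in Chang), exhibit an affine subspace $V$ on which the low-frequency (spectrum) part is at least $\alpha^t$, and control the high-frequency tail uniformly via Parseval.

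First, apply \cref{lemma:chang} with $\gamma = 1/2$ to get $k := \dim(\sp(\Spec_{\alpha/2}(A))) = O(\log(1/\alpha))$, and let $W := \sp(\Spec_{\alpha/2}(A))^\perp$, a linear subspace of codimension at most $k$. For every $r \in \Spec_{\alpha/2}(A)$, the character $\chi_r$ is constant on cosets of $W$, so the spectrum contribution
\[
S(v) := \sum_{r \in \Spec_{\alpha/2}(A)} \widehat{1_A}(r)^t \chi_r(v)
\]
depends only on the coset $v + W$. Averaging $S$ over $v \in \F_2^n$ yields $\widehat{1_A}(0)^t = \alpha^t$ (only $r=0$ survives), so there exists a shift $v_0 \in \F_2^n$ with $S(v_0) \geq \alpha^t$. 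Set $V := v_0 + W$.

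Next, bound the tail uniformly in $v$: since $|\widehat{1_A}(r)| < \alpha/2$ outside the spectrum, factoring out $|\widehat{1_A}(r)|^{t-2}$ and using Parseval (together with $0 \in \Spec_{\alpha/2}(A)$, so that $\widehat{1_A}(0)^2 = \alpha^2$ is excluded from the tail) gives
\[
\Big|\sum_{r \notin \Spec_{\alpha/2}(A)} \widehat{1_A}(r)^t \chi_r(v)\Big| \leq (\alpha/2)^{t-2}(\alpha - \alpha^2) = \frac{\alpha^{t-1} - \alpha^t}{2^{t-2}}.
\]
Combining this with $S(v) = S(v_0) \geq \alpha^t$ for $v \in V$,
\[
1_A^{*t}(v) \geq \alpha^t - \frac{\alpha^{t-1} - \alpha^t}{2^{t-2}} = \alpha^t\Big(1 + \frac{1}{2^{t-2}}\Big) - \frac{\alpha^{t-1}}{2^{t-2}},
\]
which is the main claim. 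For the ``in particular'' regime $t > \log_2(1/\alpha)+2$, we have $\alpha \cdot 2^{t-2} > 1$, so the bound is strictly positive (hence $V \seq tA$), and a short rearrangement shows it dominates $(\alpha/2)^t$.

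The main obstacle I expect is the odd-$t$ case, since individual terms $\widehat{1_A}(r)^t$ can be negative on the spectrum. A naive attempt with a linear $V$ would force $v_0 = 0$ and $\chi_r(v_0)=1$ for all $r$, after which destructive cancellations among the negative terms could drag $S$ below $\alpha^t$. Allowing $V$ to be an affine shift $v_0 + W$ and picking $v_0$ by the averaging argument absorbs the sign issue uniformly in the parity of $t$, since $S(v_0) \geq \E_{v_0}[S(v_0)] = \alpha^t$ regardless of the signs of individual spectrum coefficients.
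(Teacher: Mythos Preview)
Your proof is correct and follows essentially the same approach as the paper: split the Fourier expansion of $1_A^{*t}$ at the threshold $\alpha/2$, use Chang's lemma to bound the span of the large spectrum, control the tail by Parseval, and handle the spectrum contribution by an averaging argument. Your averaging step---observing that $S(v)$ is constant on cosets of $W = \sp(\Spec_{\alpha/2}(A))^\perp$ and that $\E_v[S(v)] = \alpha^t$, hence some coset $v_0 + W$ achieves $S \geq \alpha^t$---is equivalent to the paper's \cref{claim:affine}, which chooses random signs $s_{r'}$ on a basis of the spectrum and extends linearly; your formulation is arguably cleaner since the linear consistency of the signs $s_r = \langle v_0, r\rangle$ is automatic.
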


Below we prove \cref{lemma:bogolyubov-quasipoly} only for odd values of $t$, which is slightly more complicated than the case of even $t$. After the proof, we remark how to modify the proof to work for even $t$'s.

\begin{proof} 
Let $A \seq \F_2^n$ be a set of size $\abs{A} = \alpha \cdot \abs{\F}^n$, for some $\alpha \in (0,1]$. 
Consider the set
\begin{equation*}
    R = \Spec_{\alpha/2}\setminus \{0\} = \{ r \in F_2^{n}\setminus\{0\} : \abs{\widehat{1_A}(r)}  > \frac{\alpha}{2} \}
\end{equation*}

Next we define an affine subspace $V = \{v \in \F_2^n : \ip{v,r} = s_r \ \forall r \in R\}$ for some $s_r \in \{0,1\}$ to be defined later, and claim that $V$ satisfies the conclusions of \cref{lemma:bogolyubov-quasipoly}.
We will need the following two claims.

\begin{claim}\label{claim:affine}
    For all $r \in R$ there exists $s_r \in \{0,1\}$ such that
    (1) $\sum_{r \in R} \widehat{1_A}(r)^t \cdot (-1)^{s_r} \geq 0$
    and (2) if $r^* \in R$ is a linear combination $r^* = \sum_{r \in R} c_r \cdot r$ of vectors in $R$ (with $c_r \in \F_2$),
    then $s_{r^*} = \sum_{r \in R} c_r \cdot s_r \pmod 2$.
\end{claim}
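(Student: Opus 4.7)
The plan is to produce the assignment $r \mapsto s_r$ as the restriction to $R$ of a suitable linear functional on the ambient subspace $V^\star := \sp(R) \seq \F_2^n$. Once $s_r = \ell(r)$ for some fixed linear $\ell : V^\star \to \F_2$, condition (2) is automatic: for any expression $r^\star = \sum_{r \in R} c_r \cdot r$ (with $c_r \in \F_2$) we have
\begin{equation*}
    s_{r^\star} = \ell(r^\star) = \sum_{r \in R} c_r \cdot \ell(r) = \sum_{r \in R} c_r \cdot s_r \pmod 2,
\end{equation*}
so the consistency requirement among different linear representations is built in by linearity. This reduces the claim to finding one linear functional $\ell$ for which the signed sum in (1) is non-negative.

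To produce such an $\ell$, I will use a first moment argument. Let $\ell$ be a uniformly random linear functional on $V^\star$ (equivalently, one may draw a uniformly random $u \in \F_2^n$ and set $\ell(r) = \ip{u,r}$). Since every $r \in R$ is nonzero, the value $\ell(r)$ is a uniform bit, hence $\E_{\ell}[(-1)^{\ell(r)}] = 0$ for each individual $r \in R$. By linearity of expectation,
\begin{equation*}
    \E_{\ell}\!\left[\sum_{r \in R} \widehat{1_A}(r)^t \cdot (-1)^{\ell(r)}\right] \;=\; \sum_{r \in R} \widehat{1_A}(r)^t \cdot \E_\ell[(-1)^{\ell(r)}] \;=\; 0.
\end{equation*}
Since a random variable with expectation $0$ must attain a value $\geq 0$ somewhere in its support, there exists at least one linear functional $\ell_0$ on $V^\star$ for which $\sum_{r \in R} \widehat{1_A}(r)^t \cdot (-1)^{\ell_0(r)} \geq 0$. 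Defining $s_r := \ell_0(r)$ for each $r \in R$ then simultaneously satisfies (1) and (2).

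I do not anticipate a serious obstacle here; the proof is essentially a one-line averaging argument once the correct setup is identified. The only subtlety is to make sure that the functional is chosen on $V^\star = \sp(R)$ (or equivalently on all of $\F_2^n$) rather than defining the $s_r$'s independently, which would not respect condition (2). The oddness of $t$ plays no role in this step of the argument; it matters only later, when the signs $(-1)^{s_r}$ are used together with the structure of the Fourier coefficients to lower-bound the probability that $t$ random elements landing in the affine subspace $V$ all lie in $A$.
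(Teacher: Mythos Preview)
Your proof is correct and is essentially the same as the paper's. The paper fixes a maximal linearly independent subset $R' \subseteq R$, chooses the $s_{r'}$ uniformly and independently on $R'$, and extends by linearity to all of $R$; this is exactly the same as choosing a uniformly random linear functional on $\sp(R)$, and both arguments finish with the identical ``expectation equals zero, hence some realization is $\geq 0$'' step.
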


\begin{proof}
    Let $R'$ be a maximal subset of $R$ of linearly independent vectors.
    Choose $s_{r'} \in \{0,1\}$ independently with probability 0.5 each  for every $r' \in R'$.
    Now any $r \in R \setminus R'$, can be expressed as a linear combination $r = \sum_{r'} c_{r'} \cdot r'$ of vectors in $R'$ with $c_{r'} \in \{0,1\}$
    define $s_r = \sum_{r'} c_{r'} \cdot s_{r'}$. It is immediate to verify that condition (2) is satisfied.

    In order to satisfy condition (1) note that by linearity of expectation $\E[\sum_{r \in R} \widehat{1_A}(r)^t \cdot (-1)^{s_r}] = 0$, and hence there exists a choice of $(s_r)_{r \in R}$ such that 
    $\sum_{r \in R} \widehat{1_A}(r)^t \cdot (-1)^{s_r} \geq 0$, as required.
\end{proof}

\begin{claim}\label{claim:bound-small}
    We have $\sum_{r \not\in R, r \neq 0} {\abs{\widehat{1_A}(r)}}^t \leq (\alpha/2)^{t-2}(\alpha - {\alpha}^2 )$.
\end{claim}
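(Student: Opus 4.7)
The plan is to reduce the bound to Parseval's identity by peeling off two factors of $|\widehat{1_A}(r)|$ using the definition of $R$. Concretely, since every $r \notin R$ with $r \neq 0$ satisfies $|\widehat{1_A}(r)| \leq \alpha/2$, I would write
\begin{equation*}
\sum_{r \notin R,\, r \neq 0} |\widehat{1_A}(r)|^{t}
= \sum_{r \notin R,\, r \neq 0} |\widehat{1_A}(r)|^{t-2} \cdot |\widehat{1_A}(r)|^{2}
\leq (\alpha/2)^{t-2} \sum_{r \notin R,\, r \neq 0} |\widehat{1_A}(r)|^{2}.
\end{equation*}
Note this uses $t \geq 2$, which is fine since the lemma assumes $t \geq 3$.

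The remaining sum I would bound by extending it to all nonzero $r$ and then applying Parseval. Since $\widehat{1_A}(0) = \E_x[1_A(x)] = \alpha$, Parseval's identity gives
\begin{equation*}
\sum_{r \neq 0} |\widehat{1_A}(r)|^{2} \;=\; \sum_{r \in \F_2^n} |\widehat{1_A}(r)|^{2} - |\widehat{1_A}(0)|^{2} \;=\; \alpha - \alpha^{2}.
\end{equation*}
Combining the two displays yields the claimed inequality $\sum_{r \notin R,\, r \neq 0} |\widehat{1_A}(r)|^{t} \leq (\alpha/2)^{t-2}(\alpha - \alpha^{2})$.

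There is no real obstacle here — the argument is just the standard ``use spectral threshold to pull out $(\alpha/2)^{t-2}$, then apply Parseval on the remaining $\ell_2$ mass'' trick. The only small subtlety is remembering to subtract the $r=0$ contribution $\alpha^2$ before invoking Parseval, which is exactly what produces the $\alpha - \alpha^2$ factor on the right-hand side.
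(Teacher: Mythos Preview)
Your proposal is correct and follows essentially the same approach as the paper: pull out $(\alpha/2)^{t-2}$ using the spectral threshold defining $R$, extend the remaining $\ell_2$ sum to all nonzero $r$, and apply Parseval together with $\widehat{1_A}(0)=\alpha$ to get $\alpha-\alpha^2$. The paper's argument is line-for-line the same idea.
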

\begin{proof}
For $t\geq 3$, it holds that
\begin{eqnarray*}
	\sum_{r \not\in R, r \neq 0} {\abs{\widehat{1_A}(r)}}^t 
	\leq  \max_{r \not\in R, r \neq 0}{\abs{\widehat{1_A}(r)}}^{t-2}\sum_{r \not\in R, r \neq 0} {\abs{\widehat{1_A}(r)}}^2 \\
	\leq  (\alpha/2)^{t-2}\sum_{r \in {\F_2}^n \setminus \{0\}} {\widehat{1_A}(r)}^2 \\
	<  (\alpha/2)^{t-2}(\alpha - {\alpha}^2 )
    \enspace.
\end{eqnarray*}
\end{proof}

Define an affine subspace $V = \{v \in \F_2^n : \ip{v,r} = s_r \ \forall r \in R\}$,
where $s_r \in \{0,1\}$ is from \cref{claim:affine}.
Note that if the vectors in $R$ are linearly dependent, then
the second condition of \cref{claim:affine} guarantees that we can define $V = \{v \in \F^n : \ip{v,r'} = s_{r'} \ \forall r' \in R'\}$ for a maximal set $R' \subset R$ of linearly independent vectors in $R$, and the remaining constraints will be satisfied by linearity.
Then, according to \cref{lemma:chang} we have
\begin{equation*}
    \dim(V) \geq n - O \left(\log(1/\alpha) \right)
	\enspace.
\end{equation*}

Using the two claims above, and noting that $\Pr_{a_1,a_2,..,a_{t-1} \in \F^n}[a_1, a_2, a_3,..,a_t \in A] = 1_A*1_A*..*1_A(v)$ ($t$ times), for any $v \in V$ we have
\begin{eqnarray*}
	&\Pr_{a_1,a_2,..,a_{t-1} \in \F^n}[a_1, a_2, a_3,..,a_t \in A] &= 1_A*1_A*..*1_A(v) \\
    &&= \sum_{r \in \F^n} {\widehat{1_A}(r)}^t \chi_r(v) \\
	&&= {\widehat{1_A}(0)}^t + \sum_{r \in R} {\widehat{1_A}(r)}^t \chi_r(v) + \sum_{r \not\in R, r \neq 0} {\widehat{1_A}(r)}^t \chi_r(v)\\
	&& \geq {\alpha}^t + \sum_{r \in R} {\widehat{1_A}(r)}^t \cdot (-1)^{s_r} - (\alpha/2)^{t-2}(\alpha - {\alpha}^2) \\
	&& \geq {\alpha}^t + 0 - (\alpha/2)^{t-2}(\alpha - {\alpha}^2) \\
	&& = {\alpha}^t\left( 1 + \frac{1}{2^{t-2}} \right)  - \frac{{\alpha}^{t-1}}{2^{t-2}}
    \enspace.
 \end{eqnarray*}
 In particular, for $t > \log_2(1/\alpha)+2$, we have
\begin{eqnarray*}
	&\Pr_{a_1,a_2,..,a_{t-1} \in \F^n}[a_1, a_2, a_3,..,a_t \in A] & \geq {\alpha}^t\left( 1 + \frac{1}{2^{t-2}} \right)  - \frac{{\alpha}^{t-1}}{2^{t-2}} \\
	&& \geq {\alpha}^t\left( 1 + \frac{1}{2^{t-2}}\right)  - {\alpha}^t \\
	&& \geq (\alpha/2)^t
    \enspace,
\end{eqnarray*}
as required.
\end{proof}

\begin{remark}
    For even values of $t$ the lemma is slightly easier. Specifically,
    since ${\widehat{1_A}(r)}^t$ is always non-negative, we can take $s_r = 0$ in \cref{claim:affine}, and the rest of the proof works the same.
\end{remark}

\section{High Agreement with Two-Sided Error}

In this section, we prove \cref{thm:highagreement_mm}. Specifically, we show that if there exists an algorithm which, given two matrices $A, B \in \F^{n \times n}$, runs in time $T(n)$ and correctly computes their product on a large fraction of all entries of output on average, then there exists another algorithm that runs in  $\widetilde{O}(T(n))$ time and correctly computes their product on all entries of output. The proof essentially uses the self-correction of linearity~\cite{BlumLR90}.

\TWOSIDED

\begin{proof}
    Given the algorithm $\ALG$ as in the assumption of the theorem, we design $\ALG^*$ as follows.

	\begin{algorithm}
	\caption{Approximation for High Agreement Matrix Multiplication Algorithms}\label{alg:highagreement-mm}
        \KwIn{$A,B \in\F^{n \times n}$, $ALG$}
        \KwOut{$A \cdot B$}

        Let $k = O(\log(n))$
        
        \For{$r = 0$ to $k$}{
		  Generate two random matrices $R,S\in\F^{n \times n}$

		  Select two random variables $\pi,\sigma \in [n]$ independently

		  $M = \ALG((A+R)^{\pi,0},(B+S)^{0,\sigma}) - \ALG(R^{\pi,0},(B+S)^{0,\sigma}) - \ALG((A+R)^{\pi,0}, S^{0,\sigma}) + \ALG(R^{\pi,0},S^{0,\sigma})$

            Let $C_r = M^{n-\pi, n-\sigma}$
		}
		Define the matrix $C \in \F^{n \times n}$ by taking the majority vote of all $C_r$ in each coordinate.
	\end{algorithm}
    \paragraph{Correctness:}
    Consider an entry $(i,j)$ in the output matrix $A \cdot B$.
    In each iteration we call $\ALG$ four times, and in each of the calls the input is distributed uniformly in $\F^{n \times n}$. Furthermore, since $\pi, \sigma$ are chosen uniformly, it follows that $(i+\pi, j+\sigma)$ are distributed uniformly.
    Therefore, the probability that that all the four calls of $\ALG$ produce the correct answer in this entry is at least $1 - 4\alpha$.
    Therefore, for each repetition $r$, we have
    \begin{equation*}
        \Pr[(C_r)_{i,j} = (A \cdot B)_{i,j}] \geq 1-4\alpha
        \enspace.
    \end{equation*}
    By Chernoff bound, the probably that the majority vote of the $k$ repetition will produce an incorrect answer is upper bounded by
	\begin{equation*}
        \Pr[(C_r)_{i,j} = (A \cdot B)_{i,j}] \geq \exp\left(-\Omega((1-4\alpha - 1/2) \cdot k)\right) < 1/n^3
        \enspace,
	\end{equation*}
    Here we make the assumption that $\alpha$ is bounded below $1/8$.

    Hence, the probability that a particular entry $(i,j)$ is incorrect after $k$ iterations is at most $n^{-3}$. By union bound over all entries, the probability that at least one entry is incorrect in the output matrix is at most $n^2 \cdot n^{-3}  = 1/n$.

    \paragraph{Running time:}
	The total running time is dominated by $O(\log(n))$ invocations of $\ALG$, and hence, the runtime of $\ALG^*$ is $O(T(n) \cdot \log(n))$.
\end{proof}

\section{Low Agreement with One-Sided Error}

In this section we prove \cref{thm:mm-alg-reduction-one-sided}.
We restate the theorem here for convenience.

\ONESIDED*

Before proving the theorem, we need some definitions. We start by defining the notion of a {\it good} coordinate. We say a coordinate $(i,j) \in [n] \times [n]$  is {\it good}, if $ALG$ returns $1$ at the entry $(i,j)$ for more than ${\delta}/{2}$ fraction of possible inputs.

    \begin{definition}\label{def:good-coordinates}
    Denote by $G$ the set of \emph{good} coordinates, defined as
    \begin{equation*}
        G = \{(i,j) \in [n] \times [n] : \Pr_{A,B  \in \F_2^{n \times n}}[ALG(A,B)_{i,j} = 1] > \delta/2 \}
        \enspace.
    \end{equation*}
    \end{definition}

    The following claim is immediate from the definition and the assumptions of the theorem.
    \begin{claim}\label{prob-good-coordinate}
        $|G| \geq (\delta/2) \cdot n^2$.
    \end{claim}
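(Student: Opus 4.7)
The plan is a simple two-step averaging argument. First, I would translate the hypothesis from an agreement bound into a lower bound on the expected ``ones rate'' of $\ALG$, using the one-sided error condition. Concretely, for uniformly random $A,B \in \F_2^{n\times n}$ the entry $(AB)_{i,j}$ is essentially uniform on $\F_2$ (it is exactly uniform conditional on column $j$ of $B$ being nonzero, which happens with probability $1 - 2^{-n}$), so
\[
\agr(\ALG(A,B),AB)
= \Pr_{i,j}[\ALG(A,B)_{i,j}=0,\,(AB)_{i,j}=0] + \Pr_{i,j}[\ALG(A,B)_{i,j}=1,\,(AB)_{i,j}=1].
\]
Because the one-sided error condition forces $\ALG(A,B)_{i,j}=0$ whenever $(AB)_{i,j}=0$, the first term equals $\Pr_{i,j}[(AB)_{i,j}=0] \approx 1/2$, and the second term equals $\Pr_{i,j}[\ALG(A,B)_{i,j}=1]$. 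Taking expectation over $A,B$, the hypothesis $\E[\agr(\ALG(A,B),AB)] \geq 1/2 + \delta$ yields
\[
\Pr_{A,B,i,j}\!\bigl[\ALG(A,B)_{i,j} = 1\bigr] \;\geq\; \delta
\]
(up to a $2^{-n-1}$ additive error which I would absorb into $\delta$, exactly as the remark following the theorem already does).

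Second, I would perform a standard Markov-type averaging. Set $p_{i,j} := \Pr_{A,B}[\ALG(A,B)_{i,j}=1]$, so that $\sum_{(i,j)} p_{i,j} \geq \delta n^2$ by the previous step and the definition of expectation over the uniform distribution on $(i,j)$. Splitting the sum according to membership in $G$ and using $p_{i,j} \leq 1$ on $G$ and $p_{i,j} \leq \delta/2$ off $G$:
\[
\delta n^2 \;\leq\; \sum_{(i,j)\in G} p_{i,j} + \sum_{(i,j)\notin G} p_{i,j} \;\leq\; |G| + \frac{\delta}{2}(n^2 - |G|) \;\leq\; |G| + \frac{\delta}{2}\,n^2.
\]
Rearranging gives $|G| \geq (\delta/2)\, n^2$, which is exactly the claim.

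There is no real obstacle here. The only mildly delicate point is the transition from the agreement hypothesis to a ones-rate hypothesis, which requires noticing that $(AB)_{i,j}$ is nearly uniform and that the one-sided error assumption makes the ``$0$-disagreement'' term vanish. Once that is done, the averaging step is immediate.
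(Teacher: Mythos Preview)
Your proof is correct and follows essentially the same approach as the paper: define $p_{i,j} = \Pr_{A,B}[\ALG(A,B)_{i,j}=1]$, use $\E_{i,j}[p_{i,j}] \geq \delta$, and split the average over $G$ and its complement with the trivial bounds $p_{i,j}\leq 1$ and $p_{i,j}\leq \delta/2$. The only minor difference is that the paper invokes the reformulation $\Pr_{A,B,i,j}[\ALG(A,B)_{i,j}=1]\geq \delta$ directly (it is stated as an equivalent hypothesis in the remark following the theorem), whereas you rederive it from the agreement assumption and the one-sided error condition; the averaging step is identical.
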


    \begin{proof}
    Let $p_{i,j} = \Pr_{\substack{A,B  \in \F_2^{n \times n}}}[ALG(A,B)_{i,j} = 1]$. Note that by the assumptions of \cref{thm:mm-alg-reduction-one-sided}, we have $\E_{i,j}[p_{i,j}] \geq \delta$. Note that
    \begin{equation*}
        \delta \leq \E_{i,j \in [n] \times [n]}[p_{i,j}]
        \leq \Pr_{i,j}[(i,j) \in G] \cdot 1 + \Pr_{i,j}[(i,j) \not\in G]  \cdot (\delta/2)
        \leq \Pr_{i,j}[(i,j) \in G] \cdot 1 + 1 \cdot (\delta/2)
        \enspace,
    \end{equation*}
    and hence $\Pr[(i,j) \in G] \geq \delta/2$, as required. 
    \end{proof}
    
	Next, we define the notion of good input matrices with respect to a good coordinate. 
 
    \begin{definition}\label{defgood-matrices}
        For a coordinate (i,j), define $X_{i,j}$ as follows.    
        \begin{equation*}
            X_{i,j} = \{A: \Pr_{B}[\ALG(A,B)_{i,j}= 1]\geq \delta/4\}
            \enspace.
        \end{equation*}
        Given a coordinate $(i,j)$ and a matrix $A$, define $Y_{i,j}^A$ to be the set of matrices $B$ for which $\ALG$ returns 1 at  the entry $(i,j)$. That is,
        \begin{equation*}
            Y_{i,j}^A = \{B: \ALG(A,B)_{i,j}=1 \}
            \enspace.
        \end{equation*}    
    \end{definition}
    
    We make the following claim about the densities of $X_{i,j}$ and $Y_{i,j}^A$.
    \begin{claim}\label{claim-density}
    For any $(i,j) \in G$ it holds that $\Pr_{A \in \F^{n \times n}}[A \in X_{i,j}] \geq \delta/4$.
    Furthermore, if $A \in X_{i,j}$ then
    $\Pr_{B \in \F^{n \times n}}[B \in Y_{i,j}^A] \geq \delta/4$.
    \end{claim}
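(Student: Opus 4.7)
The plan is to prove both assertions by straightforward averaging arguments, essentially the same trick used in the proof of \cref{prob-good-coordinate}. Fix a good coordinate $(i,j) \in G$, and for each $A \in \F_2^{n \times n}$ set
\begin{equation*}
    p(A) \;=\; \Pr_{B \in \F_2^{n \times n}}[\ALG(A,B)_{i,j} = 1].
\end{equation*}
By the definition of $G$ we have $\E_A[p(A)] > \delta/2$.

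For the first claim, I would split the expectation according to whether $A \in X_{i,j}$ or not, and upper bound $p(A)$ by $1$ on the first piece and by $\delta/4$ on the second piece (using $X_{i,j} = \{A : p(A) \geq \delta/4\}$). Writing $q = \Pr_A[A \in X_{i,j}]$, this gives
\begin{equation*}
    \delta/2 \;<\; \E_A[p(A)] \;\leq\; q \cdot 1 + (1-q)\cdot(\delta/4) \;\leq\; q + \delta/4,
\end{equation*}
so $q \geq \delta/4$, which is exactly the first claimed inequality.

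The second claim is immediate from the definitions: if $A \in X_{i,j}$, then by \cref{defgood-matrices} the condition $p(A) \geq \delta/4$ holds, and by the definition of $Y_{i,j}^A$ we have $\Pr_B[B \in Y_{i,j}^A] = p(A) \geq \delta/4$. Since both parts reduce to one-line averaging/definitional manipulations, there is no real obstacle here; the only care needed is to be consistent with which threshold ($\delta/2$ versus $\delta/4$) is used at each stage.
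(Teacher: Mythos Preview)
Your proof is correct and follows essentially the same approach as the paper: define $p(A)=\Pr_B[\ALG(A,B)_{i,j}=1]$, split the expectation $\E_A[p(A)]\geq\delta/2$ according to membership in $X_{i,j}$, bound by $1$ and $\delta/4$ respectively to get $\Pr[A\in X_{i,j}]\geq\delta/4$, and note the second part is immediate from the definition of $X_{i,j}$.
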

   
    \begin{proof}
    Fix a good coordinate $(i,j) \in G$, and for each $A \in \F^{n \times n}$ let $p_A = \Pr_{B \in \F^{n \times n}}[\ALG(A,B)_{(i,j)}=1]$. From the definition of $G$ we have $\E_A[p_A] \geq \delta /2$.
    \begin{equation*}
        \delta/2 \leq \E_A[p_A] = \Pr_{A \in \F^{n \times n}}[A \in X_{i,j}] \cdot 1 + \Pr_{A \in \F^{n \times n}}[A \not\in X_{i,j}] \cdot \delta/4 \leq \Pr_{A \in \F^{n \times n}}[A \in X_{i,j}] + \delta/4
        \enspace,
    \end{equation*}
    and hence, $\Pr[A \in X_{i,j}] \geq \delta/4$.

    The furthermore part is by definition of $X_{i,j}$.
    \end{proof}

    \begin{definition}\label{def:low_rank_matrix}
       Denote by $L^k \in \F^{n \times n}$ a random matrix of rank at most $k$, constructed by sampling the first $k$ columns independently uniformly at random from $\F^n$, and then taking the remaining $n-k$ columns to be uniformly random linear combinations of the first $k$ vectors.
    \end{definition}
    
    The following lemma is from \cite{AsadiGGS22}. It shows that 
    if $L^{2k}_A$ is a random matrix of rank at most $2k$ sampled as in \cref{def:low_rank_matrix},
    then $M_A = A - (L^{2k}_A)$ belongs to any subspace of matrices of co-dimension $k$ with a non-negligible probability. We provide the proof of the lemma here for completeness.
    
    \begin{lemma}[Lemma 4.8 from \cite{AsadiGGS22}]\label{lemma:extended-matrix}
        Fix a matrix $A \in \F_2^{n \times n}$, let $k$ be a parameter, and let $\ell \geq 2k$.
        Let $L^{\ell}_A$ be a random matrix of rank at most $\ell$ sampled as in \cref{def:low_rank_matrix},
        and let $M_A = A - (L^{\ell}_A)$.
        Then, for any subspace $V \seq \F^{n \times n}$ of $\dim(V) \geq n^2-k$ it holds that
        \begin{equation*}
            \Pr[M_A \in V] \geq \frac{1}{2\abs{\F}^k}\enspace.
        \end{equation*}
    \end{lemma}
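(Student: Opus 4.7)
The plan is to prove the bound via Fourier analysis on the additive group $(\F_2^{n \times n}, +)$, applied to the distribution $\mu$ of $L := L_A^{\ell}$. Note that $M_A \in V$ iff $L \in A+V$, and WLOG $V$ has codimension exactly $k$ (a smaller codimension only strengthens the conclusion). The standard indicator identity for affine subspaces then gives
\[
\Pr[L \in A+V] \;=\; \frac{1}{2^k}\bigg(1 + \sum_{H \in V^\perp \setminus \{0\}} (-1)^{\ip{H, A}}\, \hat{\mu}(H)\bigg),
\]
where $V^\perp \seq \F_2^{n \times n}$ has dimension $k$, the pairing is $\ip{H, X} = \sum_{i,j} H_{i,j} X_{i,j}$, and $\hat{\mu}(H) = \E_L[(-1)^{\ip{H, L}}]$. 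It therefore suffices to show $\sum_{H \in V^\perp \setminus \{0\}} |\hat{\mu}(H)| \le 1/2$, which will immediately yield the desired lower bound $\Pr[M_A \in V] \ge 1/(2 \cdot 2^k)$.

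The key step is to compute $\hat{\mu}(H)$ for nonzero $H$. I will parametrize the random rank-$\ell$ matrix as $L = UW$, where $U \in \F_2^{n \times \ell}$ is uniform (its columns are precisely the first $\ell$ columns of $L$) and $W = [I_\ell \mid R]$ with $R \in \F_2^{\ell \times (n-\ell)}$ uniform. Using the trace-cyclicity identity $\ip{H, UW} = \ip{HW^T, U}$ and averaging over $U$ first collapses the Fourier coefficient to $\hat{\mu}(H) = \Pr_R[HW^T = 0]$. Splitting $H = [H_1 \mid H_2]$ columnwise into an $n \times \ell$ block and an $n \times (n-\ell)$ block, the condition $HW^T = 0$ unfolds to the affine equation $H_2 R^T = H_1$.

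The main work is then to show that $\Pr_R[H_2 R^T = H_1] \le 2^{-\ell}$ for every nonzero $H$. If $H_2 = 0$ then $H_1 \neq 0$ and the probability is zero; otherwise $H_2$ has rank $r \ge 1$, and each of the $\ell$ columns of $H_2 R^T$ is an independent uniform vector on the $2^r$-element column space of $H_2$, so matching $H_1$ column-by-column has probability at most $(2^{-r})^\ell \le 2^{-\ell}$. Substituting back,
\[
\sum_{H \in V^\perp \setminus \{0\}} |\hat{\mu}(H)| \;\le\; (2^k - 1) \cdot 2^{-\ell} \;\le\; 2^{k-\ell} \;\le\; 2^{-k} \;\le\; 1/2,
\]
using $\ell \ge 2k$ (the case $k = 0$ is trivial since then $V = \F_2^{n \times n}$). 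The main conceptual point is the choice of parametrization $L = U[I_\ell \mid R]$: the identity block forces the $j$-th column of $HW^T$ to depend on $R$ only through the $j$-th column of $R^T$, producing $\ell$ independent constraints and hence the $\ell$-fold decay of every nontrivial Fourier coefficient that drives the proof.
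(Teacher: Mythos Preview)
Your proof is correct, and it follows a genuinely different route from the paper's argument.

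The paper proceeds combinatorially: it writes membership in $V$ as $k$ linear constraints $G\cdot m = 0$, performs Gaussian elimination to extract $k$ pivot coordinates $(i_1,j_1),\ldots,(i_k,j_k)$ of $M_A$, observes that these lie in at most $k$ distinct rows, and then argues that with probability at least $1/2$ the corresponding rows of $L_A^\ell$ are linearly independent (this is where $\ell \ge 2k$ enters). Conditioned on that event, each pivot entry is uniform and independent of everything else appearing in its constraint, so all $k$ constraints hold with probability exactly $|\F|^{-k}$.

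You instead do a clean Fourier computation: parametrize $L = U[I_\ell \mid R]$, average over $U$ first to collapse $\hat\mu(H)$ to $\Pr_R[H_2 R^T = H_1]$, and use the column independence of $R^T$ to get the uniform bound $|\hat\mu(H)| \le 2^{-\ell}$ for all nonzero $H$. Summing over the $2^k - 1$ nonzero characters in $V^\perp$ then gives the result. Your approach is shorter and more systematic, yields a quantitative statement about the bias of $L$ against \emph{every} nonzero linear test (not just the $k$ constraints of $V$), and adapts immediately to affine $V$ by replacing $A$ with $A + v_0$. The paper's argument, on the other hand, is Fourier-free and is written to go through verbatim over an arbitrary finite field $\F$, whereas your character computation as stated is specific to $\F_2$ (though it extends to general $\F$ with the obvious modifications).
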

    
    \begin{proof}
        Since $V$ has co-dimension at most $k$, a matrix in $V$ must be orthogonal to all the basis vectors of its orthogonal complement. Since there are up to $k$ such basis vectors, the membership condition of $M_A$ in $V$ can be written down in the form of $k$ linear constraints. Viewing $M_A$ as a vector in $\mathbb{F}^{n^2}$, we can write the $k$ linear constraints on the elements of the matrix $M^{2k}_A$ in the form
        \begin{equation*}
            \alpha_1 \cdot (M_A)_{i_1,j_1} + \alpha_2 \cdot (M_A)_{i_2,j_2} + \cdots + \alpha_r (M_A)_{i_t,j_t} = 0
            \enspace.
        \end{equation*}
Here, $\alpha_i$'s are constants and $t \in [n^2]$ is the number of elements upon which the constraints depend. Writing $M_A$ as $m \in \mathbb{F}^{n^2}$, we can represent these linear constraints as a system of equations of the form $G \cdot m=0$, where $G$ is a $k \times n^2$ matrix. Now, we perform Gaussian elimination on $G$, which gives us a matrix $G'$, where each row has a 1 entry such that all the other entries in the column containing this 1 are 0. We refer to such 1's as \emph{leading 1's}. That is, by permuting the columns of $G'$, we may think of it as being of the form $G' = [I_k | G^*]$. 

Consider the set of $k$ coordinates of $m$ corresponding to the $k$ leading $1$s in $G'$, one from each row. These $k$ coordinates of $m$ in turn correspond to $k$ pairs of coordinates $\{(i_1,j_1),(i_2,j_2) \ldots (i_k,j_k)\}$ in $M_A$. These $k$ pairs of coordinates in $M_A$ can belong to at most $k$ rows in  $M_A$.  We now bound the probability of none of these $k$ rows in $L_A^{\ell}$ being a linear combination of the other rows. Let us denote this event as $\Omega$. Then 
\begin{align*}
    \Pr[\Omega] &= \left(1- \frac{1}{2^\ell}\right) \left(1- \frac{2}{2^\ell}\right)  \left(1- \frac{4}{2^\ell}\right) \cdots \left(1- \frac{2^{k-1}}{2^\ell}\right) \\
    &\geq \left(1- \frac{2^{k-1}}{2^\ell}\right)^{k} \\
    &\geq \left(1- \frac{1}{2^{k+1}}\right)^{k} \\
    &\geq 1- \frac{k}{2^{k+1}} \geq \frac{1}{2}
\end{align*}
If $\Omega$ happens, then we get a coordinate $(i_r,j_r)$ in $M_A$ corresponding to the $r^{th}$ linear constraint, for all $r \in [k]$, such that no other constraint depends upon it (as it corresponds to a leading $1$) and it is chosen uniformly at random (since the rows containing these coordinates are linearly independent). Therefore, the probability that this random value satisfies the $i^{th}$ constraint is $1/\lvert\mathbb{F}\rvert$. To see this, assume that the values of all other coordinates involved in the $i^{th}$ constraint are fixed, then we are left with only one choice for the value of the coordinate $(c_i,c_i')$ which satisfies the constraint. Therefore, we have 
\begin{align*}
    \Pr[M_A \in V] &= \Pr[\text{All $k$ linear constraints are satisfied}] \\
    &= \Pr[\Omega] \cdot \frac{1}{\mathbb{\lvert F \rvert}^k} \geq \frac{1}{2\mathbb{\lvert F \rvert}^k}
    \enspace.
\end{align*}
This completes the proof of \cref{lemma:extended-matrix}.
\end{proof}

\subsection{Computing the good coordinates}\label{sec:good-coordinates}

Next, we start describing the reduction guaranteed by \cref{thm:mm-alg-reduction-one-sided}.
As a first step we design \cref{alg:forgoodentries}, that gets two matrices $A,B$ and outputs a matrix $C$ with values in $\F_2 \cup \{\bot\}$, satisfying the following guarantees.
\begin{enumerate}
    \item If $C_{i,j} \neq \bot$, then $C_{i,j}$ contains the correct values, i.e., $C_{i^*,j^*} = (A \cdot B)_{i^*,j^*}$.
    \item For any good coordinate $(i^*,j^*) \in G$ we have $\Pr[C_{i^*,j^*} = (A \cdot B)_{i^*,j^*}] \geq \delta_0$, where $\delta_0$ is some constant that depends only on $\delta$. That is, with non-negligible probability $C_{i^*,j^*}$ contains the correct answer, and not $\bot$.

\end{enumerate}
Then, in \cref{sec:proof-one-sided} we use \cref{alg:forgoodentries} as a subroutine, in order to compute the entire matrix $A \cdot B$ correctly.

    \begin{algorithm}
	\caption{Approximating good coordinates for one-sided error algorithms}\label{alg:forgoodentries}
    \KwIn{$A, B \in\F_2^{n \times n}$, $\ALG$}
    \KwOut{An $n \times n$ matrix $C$ with values $\F_2 \cup \{\bot\}$}

        Let $t > \log(4/\delta)+2$

        Let $k = O(\log(1/{\delta}))$ from the ``in particular'' part of \cref{lemma:bogolyubov-quasipoly} with $\alpha = \delta/4$ and $t$ chosen above

        Sample two random matrices $L_A^{2k}$ and $L_B^{2tk}$ of rank at most $2k$ and $2tk$ respectively, as in \cref{def:low_rank_matrix}
            
        Define $M_A = A - L_A^{2k}$ and $M_B = B - L_B^{2tk}$

        Let $C$ be the $n \times n$ matrix initialized with all $\bot$

        Sample $t-1$ random matrices $R_1, R_2,..,R_{t-1} \in \F^{n \times n}_2$ and set $R_t = M_A - (R_1 + R_2 +..+R_{t-1})$

        \For{$r = 1,\dots,t$} {
            Sample $t-1$ random matrices $S^{(r)}_1, S^{(r)}_2,.., S^{(r)}_{t-1} \in \F^{n \times n}_2$ and set $S^{(r)}_t = M_B - (S^{(r)}_1 + S^{(r)}_2 +.. +S^{(r)}_{t-1})$
        }
        
        \For{$r,s = 1,\dots,t$} {
            Compute $\ALG(R_r, S^{(r)}_s)$     
        }
        
        \For{$(i,j) \in [n] \times [n]$} {
            \If{$\ALG(R_r, S^{(r)}_s)_{i,j}=1$ for all $r,s \in \{1,\dots,t\}$} {
                Set $C_{i,j} = \sum_{r,s} \ALG(R_r, S^{(r)}_s)_{i,j} \pmod 2$
            }
        }

        \Return $C + M_A \cdot L_B^{2tk} + L_A^{2k} \cdot M_B + L_A^{2k} \cdot L_B^{2tk}$
        // if $C_{i,j} = \bot$, then we return $\bot$ in the coordinate $(i,j)$
    \end{algorithm}

    In lines 3-4 we decompose $A = L_A^{2k} + M_A$, and $B = L_B^{2tk} + M_B$ with the intention of computing $A \cdot  B$ by writing
    \begin{align*}
        AB &= (M_A + L_A^{2k})\cdot (M_B +L_B^{2tk}) \\
        &= M_A \cdot M_B + M_A \cdot L_B^{2tk} + L_A^{2k} \cdot M_B + L_A^{2k} \cdot L_B^{2tk}
        \enspace.
    \end{align*}
    Lines 5-13 try to compute $C = M_A \cdot M_B$.
    Then, in line 14, we sum up the 4 terms.
    Using the fact that multiplication of matrices of rank $k$ takes $O(kn^2)$ time, the last three terms can be computed in $O(tkn^2)$ time, and hence, it remains to compute $M_A \cdot M_B$. The remainder of this subsection is dedicated to analyzing lines 5-13, which contain the most involved part of the algorithm.

    We would like to compute $M_A \cdot M_B$ by writing
    $M_A = R_1 + R_2 + \dots + R_t$,
    and $M_B = S^{(r)}_1 + S^{(r)}_2 + \dots + S^{(r)}_t$ for $r = 1\dots t$, and then computing $\ALG(R_r, S^{(r)}_s)$ for all $r,s$. Note that if we could guarantee that
    $\ALG(R_r, S^{(r)}_s)$ returns $R_r \cdot S^{(r)}_s$, then, we would have
    \begin{equation}\label{eq:MA-MB-dcompose}
        M_A \cdot M_B
        = \sum_{r,s} R_r \cdot S^{(r)}_s
        = \sum_{r,s} \ALG(R_r, S^{(r)}_s)
        \enspace.
    \end{equation}
    However, $\ALG$ is not guaranteed to return the product of the inputs correctly. Instead, we claim that (1) for \emph{some} good coordinates $(i^*,j^*)$ it holds $C_{i^*,j^*} = (M_A \cdot M_B)_{i^*,j^*}$, and (2) the remaining coordinates in $C$ remain $\bot$.
    This is summarized formally in the next two claims.

        \begin{claim}\label{claim:never-wrong}
        For any $(i,j) \in [n] \times [n]$ if $C_{i,j} \in \{0,1\}$ (i.e., $C_{i,j} \neq \bot$),
        then $C_{i,j} = (M_A \cdot M_B)_{i,j}$.
    \end{claim}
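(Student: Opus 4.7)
The plan is to unwind the definition of the algorithm and invoke the one-sided error assumption in its contrapositive form. I would begin by noting that the only place where $C_{i,j}$ is assigned a value different from $\bot$ is inside the \emph{if} branch of lines 11--12. Thus $C_{i,j} \neq \bot$ means precisely that $\ALG(R_r, S^{(r)}_s)_{i,j} = 1$ for \emph{every} pair $r,s \in \{1,\dots,t\}$, and in that case $C_{i,j}$ is set to $\sum_{r,s} \ALG(R_r, S^{(r)}_s)_{i,j} \bmod 2 = t^2 \bmod 2$.

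Next, I would invoke the one-sided error hypothesis of \cref{thm:mm-alg-reduction-one-sided} in its contrapositive form: since $\ALG(X,Y)_{i,j} = 1$ forces $(XY)_{i,j} = 1$, the observation above implies
\begin{equation*}
    (R_r \cdot S^{(r)}_s)_{i,j} = 1 \qquad \text{for all } r,s \in \{1,\dots,t\}.
\end{equation*}

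Then I would use the algebraic identity underlying the construction of the $R_r$'s and $S^{(r)}_s$'s. By lines 6 and 8 of the algorithm, $M_A = \sum_{r=1}^{t} R_r$ and, for every fixed $r$, $M_B = \sum_{s=1}^{t} S^{(r)}_s$. Therefore
\begin{equation*}
    M_A \cdot M_B
    = \left(\sum_{r=1}^{t} R_r\right) \cdot M_B
    = \sum_{r=1}^{t} R_r \cdot \left(\sum_{s=1}^{t} S^{(r)}_s\right)
    = \sum_{r,s=1}^{t} R_r \cdot S^{(r)}_s,
\end{equation*}
which, restricted to coordinate $(i,j)$ and reduced modulo $2$, gives $(M_A \cdot M_B)_{i,j} = \sum_{r,s} (R_r \cdot S^{(r)}_s)_{i,j} \bmod 2 = t^2 \bmod 2$.

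Combining the two computations, both $C_{i,j}$ and $(M_A \cdot M_B)_{i,j}$ equal $t^2 \bmod 2$, which proves the claim. There is essentially no obstacle: the entire content of the claim is the observation that the one-sided error property ensures that \emph{whenever} all $t^2$ calls of $\ALG$ return $1$ at coordinate $(i,j)$, the true products must also be $1$ there, and then the telescoping decomposition of $M_A \cdot M_B$ matches the formula used to assign $C_{i,j}$. (Note that the non-trivial work of the algorithm --- showing that $C_{i,j}$ is \emph{not} $\bot$ with good probability for $(i,j) \in G$ --- is deferred to subsequent claims that rely on \cref{lemma:bogolyubov-quasipoly} and \cref{lemma:extended-matrix}.)
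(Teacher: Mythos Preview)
Your proof is correct and follows essentially the same approach as the paper's: both argue that $C_{i,j} \neq \bot$ forces $\ALG(R_r,S^{(r)}_s)_{i,j}=1$ for all $r,s$, invoke the one-sided error assumption to deduce $(R_r \cdot S^{(r)}_s)_{i,j}=1$, and then use the decomposition $M_A \cdot M_B = \sum_{r,s} R_r \cdot S^{(r)}_s$ at coordinate $(i,j)$. Your explicit observation that both sides equal $t^2 \bmod 2$ is a harmless cosmetic addition to what is the same argument.
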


    \begin{proof}
        Fix any coordinate $(i,j)$.
        Note that in line 13 we set the value of $C_{i,j} = \sum_{r,s} \ALG(R_r, S^{(r)}_s)_{i,j} \pmod 2$
        only if $\ALG(R_r, S^{(r)}_s)_{i,j} = 1$ for all $r,s$.
        Recall, that by the assumption of the algorithm if $\ALG(R_r, S^{(r)}_s)_{i,j} = 1$,
        then $\ALG(R_r, S^{(r)}_s)_{i,j} = (R_r \cdot S^{(r)}_s)_{i,j}$.
        The claim follows by \cref{eq:MA-MB-dcompose} restricted to the coordinate $(i,j)$, as
        \begin{equation*}
            C_{i,j} = \sum_{r,s} \ALG(R_r, S^{(r)}_s)_{i,j}
            = \sum_{r,s} (R_r \cdot S^{(r)}_s)_{i,j}
            = (M_A \cdot M_B)_{i,j}
            \enspace,
        \end{equation*}
        as required.
    \end{proof}

    \begin{claim}\label{claim:one-iteration}
        Fix a good coordinate $(i^*,j^*) \in G$. Then
        $\Pr[C_{i^*,j^*} = (M_A \cdot M_B)_{i^*,j^*}] \geq \delta_0 = 0.5^{O(\log^3(1/\delta))}$.
    \end{claim}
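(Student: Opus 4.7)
The plan is to lower-bound the probability of the event $E$ that $\ALG(R_r, S^{(r)}_s)_{i^*,j^*} = 1$ for every pair $r,s \in \{1,\dots,t\}$. Indeed, on $E$ the algorithm sets $C_{i^*,j^*} = t^2 \bmod 2$, while the one-sided error hypothesis forces $(R_r \cdot S^{(r)}_s)_{i^*,j^*} = 1$ for every $(r,s)$; since $M_A \cdot M_B = \sum_{r,s} R_r \cdot S^{(r)}_s$ (using that $\sum_s S^{(r)}_s = M_B$ for every fixed $r$), we get $(M_A M_B)_{i^*,j^*} = t^2 \bmod 2$ as well, and so $C_{i^*,j^*} = (M_A M_B)_{i^*,j^*}$ on $E$. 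It therefore suffices to lower-bound $\Pr[E]$.

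For the $R$-side, I would apply \cref{lemma:bogolyubov-quasipoly} (the ``in particular'' part, with $\alpha = \delta/4$) to the set $X_{i^*,j^*} \subseteq \F_2^{n \times n} \cong \F_2^{n^2}$, which has density at least $\delta/4$ by \cref{claim-density}. This yields an affine subspace $V_A$ of codimension at most $k = O(\log(1/\delta))$ with the property that whenever $v \in V_A$, uniform $R_1,\dots,R_{t-1}$ together with $R_t := v - \sum_{i<t} R_i$ all land in $X_{i^*,j^*}$ with probability at least $(\delta/8)^t$. Next I would apply \cref{lemma:extended-matrix} to the rank-$2k$ matrix $L_A^{2k}$ and the codimension-$k$ subspace $V_A$, obtaining $\Pr[M_A \in V_A] \geq 1/(2 \cdot 2^k)$. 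Combining,
\[
\Pr[\text{all } R_r \in X_{i^*,j^*}] \;\geq\; \frac{(\delta/8)^t}{2^{k+1}}.
\]

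For the $S$-side, I condition on all $R_r \in X_{i^*,j^*}$, so that each $Y_{i^*,j^*}^{R_r}$ has density at least $\delta/4$ by \cref{claim-density}. Applying \cref{lemma:bogolyubov-quasipoly} to each $Y_{i^*,j^*}^{R_r}$ produces, for every $r$, an affine subspace $V_B^{(r)}$ of codimension $\leq k$ with the analogous sum-decomposition guarantee. Let $V_B = \bigcap_{r=1}^t V_B^{(r)}$, an affine subspace of codimension at most $tk$. Crucially, $L_B^{2tk}$ is independent of $L_A^{2k}$ and of $R_1,\dots,R_{t-1}$, so after conditioning on all randomness that determines the subspaces $V_B^{(r)}$ I can apply \cref{lemma:extended-matrix} to the rank-$2tk$ matrix $L_B^{2tk}$ and $V_B$ to get $\Pr[M_B \in V_B] \geq 1/(2 \cdot 2^{tk})$. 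Once $M_B \in V_B^{(r)}$ for every $r$, the $t$ independent sum decompositions $M_B = \sum_s S^{(r)}_s$ simultaneously place all $S^{(r)}_s$ in $Y_{i^*,j^*}^{R_r}$ with probability at least $((\delta/8)^t)^t = (\delta/8)^{t^2}$.

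Multiplying the four contributions yields
\[
\Pr[E] \;\geq\; \frac{(\delta/8)^{t}}{2^{k+1}}\cdot\frac{(\delta/8)^{t^2}}{2^{tk+1}} \;=\; \frac{(\delta/8)^{t+t^2}}{2^{k(t+1)+2}}.
\]
Substituting $t = O(\log(1/\delta))$ and $k = O(\log(1/\delta))$, the $(\delta/8)^{t^2}$ factor contributes the dominant $2^{-\Theta(\log^3(1/\delta))}$ term, while the other factors are of the form $2^{-O(\log^2(1/\delta))}$; hence the bound is $0.5^{O(\log^3(1/\delta))} = \delta_0$, as claimed. The step I expect to be the most delicate is the $S$-side conditioning: the subspaces $V_B^{(r)}$ are random (they depend on the $R_r$'s), and one must explicitly fix all randomness determining them before invoking \cref{lemma:extended-matrix} on the independent randomness of $L_B^{2tk}$.
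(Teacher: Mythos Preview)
Your proposal is correct and follows essentially the same argument as the paper: apply \cref{lemma:bogolyubov-quasipoly} to $X_{i^*,j^*}$ and then \cref{lemma:extended-matrix} to handle the $R$-side, condition on all $R_r\in X_{i^*,j^*}$, apply \cref{lemma:bogolyubov-quasipoly} to each $Y^{R_r}$, intersect the resulting affine subspaces, apply \cref{lemma:extended-matrix} to $L_B^{2tk}$, and finally use independence across the $t$ decompositions of $M_B$; the product of the four factors gives $2^{-O(\log^3(1/\delta))}$. The delicate conditioning issue you flag (that the $V_B^{(r)}$ depend on the $R_r$'s, while $L_B^{2tk}$ is sampled independently) is exactly the point the paper addresses in its footnote.
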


    \begin{proof}
        Consider the set $X_{i^*,j^*}$ from \cref{defgood-matrices} for a good entry $(i^*,j^*) \in G$.
        By \cref{claim-density}, the density of $X_{i,j}$ is at least ${\delta}/4$, and hence, \cref{lemma:bogolyubov-quasipoly} guarantees the existence of an affine subspace $V_{i^*,j^*}$ of dimension $\dim(V_{i^*,j^*}) \geq n-k$. Then, using \cref{lemma:extended-matrix} with $V_{i^*,j^*}$ we have
    \begin{equation}\label{eq:pr-term1}
        \Pr[M_A \in {V_{X_{i,j}}}] \geq \frac{1}{2 \cdot 2^k}
        \enspace.
	\end{equation}
	Let us condition on the event that $M^A \in {V_{X_{i^*,j^*}}}$. Then by \cref{lemma:bogolyubov-quasipoly},
	\begin{equation}\label{eq:pr-term2}
		\Pr_{\substack{R_1,R_2,..,R_t \in \F_n^{n \times n} \\ \sum_r R_r = M_A}}[R_1,R_2,..,R_t \in {X_{i,j}}] \geq (\delta/8)^t
        \enspace.
	\end{equation}
	For each of $R_1,\dots,R_t$ define the sets $Y^{R_1}, Y^{R_2},\dots,Y^{R_t}$ as in \cref{defgood-matrices}.
    (Recall $Y^R$ is the set of are all matrices $S$ such that $(R \cdot S)_{i^*,j^*} = 1$. We omit the subscript $(i^*,j^*)$ for readability.)
    
    From \cref{claim-density}, we know each of $Y^{R_1},\dots,Y^{R_t}$ has density at least $\delta/4$.
    Hence, by applying \cref{lemma:bogolyubov-quasipoly} on each of them, we obtain subspaces $V_{Y^{R_1}},..,V_{Y^{R_t}}$ of co-dimension at most $k$. Define $V_{Y} = V_{Y^{R_1}} \cap V_{Y^{R_2}} \cap \dots \cap V_{Y^{R_t}}$ to be their intersection, and note that
    $\dim(V_Y) \geq n - tk$. Therefore, by applying \cref{lemma:extended-matrix} on the matrix $B$ with the subspace $V_Y$, we get~\footnote{Note that although the algorithm samples $M_B$ before  $R_1,\dots,R_t$, in fact they are sampled independently of each other, and hence \cref{lemma:extended-matrix} is applicable here.}
	\begin{equation}\label{eq:pr-term3}
		\Pr[M_B \in V_Y] \geq \frac{1}{2\cdot{2}^{tk}}
        \enspace.
	\end{equation}
	Conditioning further on the event that $M_B \in V_Y$, we apply \cref{lemma:bogolyubov-quasipoly}, and for each $r = 1,\dots,t$ we get
	\begin{equation*}
		\Pr_{\substack{S^{(r)}_1,\dots,S^{(r)}_{t} \\ \sum_s S^{(r)}_s = M_B}}[S^{(r)}_1,\dots,S^{(r)}_t \in {Y^{R_r}}] \geq (\delta/8)^t
        \enspace.
	\end{equation*}
	Since the events above are independent between different $r$'s, the probability that the algorithm returns correct output on the entry $(i^*,j^*)$ is lower bounded by the product of the probabilities in \cref{eq:pr-term1,eq:pr-term2,eq:pr-term3}, and hence
    \begin{equation*}
        \Pr[C_{i^*,j^*} = (M_A \cdot M_B)_{i^*,j^*}]
        \geq
        \frac{1}{2^{k+1}}
        \times
        (\delta/8)^t
        \times
        \frac{1}{2\cdot{2}^{tk}}
        \times
        \left((\delta/8)^t\right)^t
        \geq \frac{1}{2^{O(\log^3(1/\delta))}}
        \enspace.
    \end{equation*}
    This completes the proof of \cref{claim:one-iteration}.

    \end{proof}

\subsection{Proof of \cref{thm:mm-alg-reduction-one-sided}}\label{sec:proof-one-sided}

We are now ready to prove \cref{thm:mm-alg-reduction-one-sided}.
\cref{alg:onesidedagreement-mm} uses \cref{alg:forgoodentries} as a subroutine, by running it several times. We claim that \cref{alg:onesidedagreement-mm} correctly computes the correct answer with high probability for any input $A,B$. Note that \cref{alg:forgoodentries} is guaranteed to be correct only for good coordinates, although it does not get the good coordinates as an input, and the guarantee about the good coordinates only appears in the analysis.

\begin{algorithm}[H]
	\caption{Approximation for one-sided Agreement Matrix Multiplication Algorithms}\label{alg:onesidedagreement-mm}
            \KwIn{$A, B \in\F_2^{n \times n}$}
            \KwOut{$A \cdot B$}

        Let $C$ be the $n \times n$ matrix initialized with all $\bot$.

        Let $\delta_0$ be the constant from \cref{claim:one-iteration}

        \RepTimes{$O\left(\frac{\log(n)}{\delta \cdot \delta_0} \right)$}
        {
            Sample uniformly random $\pi, \sigma \in [n]$.
        
            Run \cref{alg:forgoodentries} with the inputs as $A^{\pi,0},B^{0,\sigma},\ALG$. 
            
            Let $C^*$ be the resulting matrix
        
            \For{$(i,j) \in [n] \times [n]$}
            {
                \If{$C^*_{i+\pi \pmod n,j+\sigma \pmod n} \neq \bot$}
                    {Set $C_{i,j} = C^*_{i+\pi,j+\sigma}$}
            }
        }
        \Return C
    \end{algorithm}

    The following claim completes the proof of \cref{thm:mm-alg-reduction-one-sided}.
    
    \begin{claim}\label{claim:one-coord-one-sided}
        Fix a coordinate $(i,j) \in [n] \times [n]$.
        \cref{alg:onesidedagreement-mm} returns the matrix $C$ such that $\Pr[C_{i,j} = (A \cdot B)_{i,j}] \geq 1 - 1/n^3$.

        In particular, by taking the union bound over all coordinates $(i,j)$ it follows that for any input $A,B$ \cref{alg:onesidedagreement-mm} returns their product with probability at least $1-1/n$. 
    \end{claim}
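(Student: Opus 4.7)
The plan is to analyze a single iteration of the outer loop of \cref{alg:onesidedagreement-mm} and show that $C_{i,j}$ is overwritten with the correct value $(AB)_{i,j}$ with probability at least $\Omega(\delta \cdot \delta_0)$, then amplify by independent repetition.

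First, I would use the shift proposition from the preliminaries: $(A^{\pi,0} \cdot B^{0,\sigma})_{i+\pi, j+\sigma} = (AB)_{i,j}$ (indices mod $n$). Thus, reading entry $(i+\pi, j+\sigma)$ of the matrix $C^*$ produced by \cref{alg:forgoodentries} on input $(A^{\pi,0}, B^{0,\sigma})$ gives a candidate for $(AB)_{i,j}$. Since $\pi, \sigma$ are uniform in $[n]$, the shifted coordinate $(i+\pi \bmod n, j+\sigma \bmod n)$ is uniform over $[n] \times [n]$. By \cref{prob-good-coordinate}, $|G| \geq (\delta/2) n^2$, so the shifted coordinate lies in $G$ with probability at least $\delta/2$ over the choice of $\pi, \sigma$.

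Condition on this event. Then by \cref{claim:one-iteration} applied to the good coordinate $(i+\pi, j+\sigma) \in G$ and the inputs $(A^{\pi,0}, B^{0,\sigma})$, the intermediate matrix inside \cref{alg:forgoodentries} satisfies $C_{i+\pi, j+\sigma} \neq \bot$ with probability at least $\delta_0$ over the internal randomness. Moreover, by \cref{claim:never-wrong} combined with the final line of \cref{alg:forgoodentries} (which adds the three low-rank correction terms), whenever the output $C^*_{i+\pi, j+\sigma}$ is not $\bot$, it equals $(A^{\pi,0} \cdot B^{0,\sigma})_{i+\pi, j+\sigma} = (AB)_{i,j}$. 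Combining, in any single iteration $C_{i,j}$ is set to the correct value with probability at least $\delta \delta_0 / 2$, and furthermore, it is \emph{never} set to an incorrect value.

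The iterations use fresh randomness (new $\pi, \sigma$ and independent internal coins of \cref{alg:forgoodentries}), so they are mutually independent. After $T = \Theta(\log(n)/(\delta \delta_0))$ iterations, the probability that none of them sets $C_{i,j}$ to the correct value is at most $(1 - \delta \delta_0/2)^T \leq 1/n^3$, which proves the per-coordinate claim. Taking a union bound over all $n^2$ coordinates yields $\Pr[C = AB] \geq 1 - 1/n$. The running time of \cref{alg:forgoodentries} is $\widetilde{O}(T(n))$ (it makes $t^2 = O(\log^2(1/\delta))$ calls to $\ALG$ plus three rank-$O(\log(1/\delta))$ matrix multiplications), so the total time for \cref{alg:onesidedagreement-mm} is $\widetilde{O}(T(n))$ for constant $\delta$. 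I do not anticipate a substantial obstacle here; the main subtlety is merely checking that shifting preserves the uniform distribution over coordinates and that \cref{claim:one-iteration,claim:never-wrong} apply to the \emph{worst-case} input $(A^{\pi,0}, B^{0,\sigma})$ (the randomness used by those claims is purely internal to \cref{alg:forgoodentries}).
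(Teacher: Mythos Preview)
Your proposal is correct and follows essentially the same argument as the paper: use the uniformity of the shift to hit a good coordinate with probability $\geq \delta/2$, invoke \cref{claim:one-iteration} for a further $\delta_0$ success probability, note that non-$\bot$ outputs are always correct, and amplify by independent repetition. If anything, you are slightly more careful than the paper in explicitly separating the role of \cref{claim:never-wrong} and the low-rank correction terms, and in writing the failure bound as $(1-\delta\delta_0/2)^T$ rather than the paper's (typo) $(1-\delta_0)^R$.
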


    \begin{proof}
        Fix a coordinate $(i,j) \in [n] \times [n]$.
        The algorithm chooses random $\pi$ and $\sigma$, and runs \cref{alg:forgoodentries} on the shifted matrices $A^{\pi,0}$ and $B^{0,\sigma}$.

        Note that since $\pi$ and $\sigma$ are chosen uniformly at random, it follows that $\Pr[(i+\pi \pmod n,j+\sigma \pmod n) \in G] = \abs{G}/n^2 \geq \delta/2$.

        Suppose that $(i+\pi \pmod n,j+\sigma \pmod n)$ is indeed a good coordinate. Then by \cref{claim:one-coord-one-sided} with probability $\delta_0$ we obtain the correct answer in the coordinate $(i+\pi \pmod n,j+\sigma \pmod n)$, in which case we set $C_{i,j}$ to be that answer $(A \cdot B)_{i,j}$.
        Otherwise, $C_{i,j}$ remains $\bot$.

        Therefore, with probability at least $(\delta/2) \cdot \delta_0$ in each iteration $C_{i,j}$ changes from $\bot$ to $(A \cdot B)_{i,j}$, and once it changes, it never changes its value again.

        By repeating the procedure $R = O(\frac{\log(n)}{\delta \cdot \delta_0})$ times, the probability that in the end of the algorithm $C_{i,j} = \bot$ is upper bounded by
        $\Pr[C_{i,j} = \bot] \leq (1-\delta_0)^{R} < 1/n^3$. This completes the proof of the claim.
    \end{proof}

We conclude the proof with the analysis of the running time of the algorithm.

\paragraph{Running Time:}
The total running time of \cref{alg:onesidedagreement-mm} is essentially dominated by the running time of \cref{alg:forgoodentries} multiplied by $O\left(\frac{\log(n)}{\delta \cdot \delta_0} \right)$. Each iteration of \cref{alg:forgoodentries} involves $O(t^2)$ calls to $\ALG$ plus additional $O(tn^2)$ time. Therefore, the running time is $O(t^2 \log(n) T(n)/\delta \cdot \delta_0)$. Since $t=O(\log 1/{\delta})$, and $\delta_0 = 0.5^{O(\log^3(1/\delta))}$ the total running time of the algorithm is is $2^{O(\log^3(1/\delta))} \cdot T(n)\log(n)$.

In particular, even for a slightly sub-constant $\delta \geq \exp(-\log^{0.33}(n))$, our algorithm runs in time $T(n) \cdot n^{o(1)}$.

\medskip

This completes the proof of \cref{thm:mm-alg-reduction-one-sided}.

\bibliographystyle{alpha}
\bibliography{refs}
\end{document}